\newtheorem{definition}{Definition}
\newtheorem{theorem}{Theorem}
\newtheorem{proof}{Proof}
\begin{document}

\title{Construction of Fixed Rate Non-Binary WOM Codes based on Integer Programming}
\author{\authorblockN{Yoju Fujino and Tadashi Wadayama   } \\[0.3cm]
\authorblockA{Nagoya Institute of Technology\\
{\small email: wadayama@nitech.ac.jp }
\footnote{A part of this work was presented at the International Symposium on Information Theory and Its Applications 2016. }\\
}
}  

\maketitle
%\begin{center}
%\today \\[0.5cm]
%\end{center}
\begin{abstract}
In this paper, we propose a construction of non-binary WOM (Write-Once-Memory) codes 
for WOM storages such as flash memories.
The WOM codes discussed in this paper are
fixed rate WOM codes where messages in a fixed alphabet of size $M$
can be sequentially written in the WOM storage at least $t^*$-times.
In this paper, a WOM storage is modeled by a state transition graph.
The proposed construction has the following two features. 
First, it includes a systematic method to determine 
the encoding regions in the state transition graph.
Second, the proposed construction includes 
a labeling method for states by using integer programming.
Several novel WOM codes for $q$ level flash memories with 2 cells are constructed by the proposed construction. 
They achieve the worst numbers of writes $t^*$ that meet the known upper bound 
in the range $4 \le q \le 8, M = 8$.
In addition, we constructed fixed rate non-binary WOM codes with 
the capability to reduce ICI (inter cell interference) of flash cells.
One of the advantages of the proposed construction is its flexibility.
It can be applied to various storage devices, to various dimensions (i.e, number of cells),
and various kind of additional constraints.
\end{abstract}

%% template for IEICE Transactions
%% v1.8 [2011/12/16]
%\usepackage[dvipdfmx]{graphicx,xcolor}
%\usepackage[fleqn]{amsmath}
%\usepackage[varg]{txfonts}
%\usepackage{color}
%\usepackage{comment}
%\usepackage{amsmath}
%\usepackage{amssymb}
%\usepackage{bm}
%\usepackage{color}
%\usepackage{algorithm}
%\usepackage{algorithmic}
%%\usepackage{latexsym}
%\usepackage{setspace}
%
%%行間をあけるときに下のコメントアウトを外す
%%\setstretch{2.0}
%
%
%\setcounter{page}{1}
%%\breakauthorline{}% breaks lines after the n-th author
%
%\field{A}
%%\SpecialIssue{}
%%\SpecialSection{}
%%\theme{}
%\title{Construction of Fixed Rate Non-Binary WOM Code based on Integer Programming}
%%\title[title for header]{title}
%\titlenote{A part of this work was presented at the International Symposium on Information Theory and Its Applications 2016.}
%\authorlist{% fill arguments of \authorentry, otherwise error will be caused. 
% \authorentry[fujino@it.cs.nitech.ac.jp]{Yoju Fujino}{n}{NIT}
% \authorentry[wadayama@nitech.ac.jp]{Tadashi Wadayama}{m}{NIT}
%% \authorentry{name}{membership}{affiliate label}
%% \authorentry{name}{membership}{affiliate label}[present affiliate label]
%% \authorentry[e-mail address]{name}{membership}{affiliate label}
%% \authorentry[e-mail address]{name}{membership}{affiliate label}[present affiliate label]
%}
%\affiliate[NIT]{The authors are with the Nagoya Institute of
%Technology, Nagoya City, Aichi, 466-8555, JAPAN.}
%%\paffiliate[present affiliate label]{Presently, the author is with the }
%
%%\received{2015}{1}{1}
%%\revised{2015}{1}{1}

%\begin{document}
%\maketitle

\section{Introduction}

Recent progress of storage media 
has been creating interests on coding techniques to ensure reliability of the media and
to lengthen the life of storage media.
Write-Once-Memory (WOM) codes are getting renewed interests as one of promising 
coding techniques for storage media.
In the scenario of the binary WOM codes, 
the binary WOM storage (or channel) is assumed as follows.
A storage cell has two states 0 or 1 and the initial state is 0. If a cell changes its state to 1,
then it cannot be reset to 0 any more.  
Punch cards and optical disks are examples of the binary WOM storages.
The celebrated work by Rivest and Shamir in 1982 \cite{first-wom}
presented the first binary WOM codes and their codes induced subsequent active researches 
in the field of the binary WOM codes \cite{cohen} \cite{fiat} \cite{slc}.

A memory cell in recent flash memories has multiple levels such as 4 or 8 levels 
and the number of levels are expected to be increased further in the near future.
This trend has produced motivation to the research activities on the non-binary WOM codes that 
are closely related to the multilevel flash memories \cite{non-binary} \cite{yakkobi} \cite{q} \cite{brian}.

There are two threads of researches on the non-binary WOM codes. 
The first one is {\em variable rate codes} and the other is {\em fixed rate codes}.

The variable rate codes are the non-binary WOM codes such that
message alphabets used in a sequence of writing processes are not necessarily identical.
This means that writing rate can vary at each writing attempt.
Fu and Vinck \cite{fu} proved the channel capacity of the variable rate non-binary WOM codes.
Recently, Shpilka \cite{shpilka} proposed a capacity achieving construction of non binary WOM codes.
Moreover, Gabrys et al. \cite{non-binary} presented a construction 
of the non-binary WOM codes based on known efficient binary-WOM codes.

Although the variable rate codes are efficient because they can fully utilize the potential of a WOM storage,
fixed-rate codes that have a fixed message alphabet is more suitable for practical implementation into storage devices.
This is because a fixed amount of binary information is commonly sent from the master system to the storage.
Kurkoski   \cite{brian} proposed a construction of fixed rate WOM codes using two dimensional lattices.
Bhatia et al. \cite{lattice-wom} showed a construction of non-binary WOM codes 
that relies on the lattice continuous approximation.
Cassuto and Yaakobi \cite{q} proposed a construction of fixed rate non-binary WOM codes using lattice tiling.

Recently, fixed rate WOM code for reducing Inter-Cell Interference (ICI) is proposed by Hemo and Cassuto \cite{ici-q}.
It is known that ICI causes drift of the threshold voltage of a flash cell according to the voltages of adjacent flash cells \cite{ici-q}.
The drift of threshold voltage degrades the reliability of the flash cell and it should be avoided. 
One promising approach to reduce ICI is to use an appropriate constraint coding to avoid certain patterns incurring large ICI.
The WOM codes presented in \cite{ici-q} not only have large $t^*$ but also satisfy certain ICI reducing constraints. 

In the case of fixed rate codes, systematic constructions for efficient non-binary WOM codes 
are still open to be studied. 
Especially, perusing optimal codes with practical parameters 
is an important subject for further studies.
Furthermore, it is desirable to develop a construction of fixed rate WOM codes that have wide range of
applicability; this means that a new construction should be applicable to wide classes of 
WOM devises such as WOM devices with ICI constraint as well.

In this paper, we propose a novel construction of fixed rate non-binary WOM codes.
The target of storage media is modeled by a memory device with restricted state transitions, 
i.e.,  a state of the memory can change to another state according to a given state transition graph.
The model is fairly general and it includes a common model of multilevel flash memories.
The proposed construction has two notable features.
First, it possesses a systematic method to determine the sets called the encoding regions that 
are required for encoding processes. 
This is a critical difference between ours and the prior work using lattice tiling \cite{q} and \cite{ici-q}.
Second, the proposed construction determines an encode table used for encoding 
by integer programming.

\section{Preliminaries}

In this section, we first introduce several basic definitions and
notation used throughout the paper. 

\begin{figure}[b]
\begin{center}
\includegraphics[width = 0.45\textwidth]{./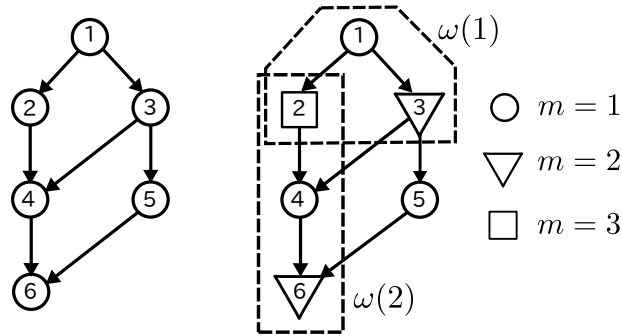}
\caption{
A state transition graph (left) and an example of encoding regions $(k=3)$ and a message function (right).
The numbers written in the nodes represent the indices of nodes.
The encoding regions $\omega(1) = \{1,2,3\}, \omega(2) = \{2,4,6\}$ are indicated by the dashed boxes (right).
The encoding regions are $\omega(x) = \emptyset$ for $x \in \{3,4,5,6\}$.
In the right figure,  the values of the message function are expressed as follows:
The values 1, 2, and 3 are represented by a circle, a triangle, and a square, respectively. 
For any message $m \in \{1,2,3\}$, 
both encoding regions $\omega(1)$ and $\omega(2)$ contain the node corresponding to $m$.
}
\label{fig:dm}
\end{center}
\end{figure}

\subsection{Basic notation}

Let $G \triangleq (V,E)$ be a directed graph
where $V = \{1,2,\ldots,|V|\}$ is the set of vertices and $E \subseteq V \times V$ is the set of edges.
If there does not exist a directed edge or a path from a vertex to itself,
then the graph $G$ is said to be a directed acyclic graph, abbreviated as DAG.
A DAG is used as a {\em state transition graph} in this paper.
The left figure in Fig.~\ref{fig:dm} is an example of DAG.
We express the DAG as $G = (V,E,r)$.
The symbol $r$ represents the root of DAG.
If for any node $s,s' \in V,(s \neq s')$ there exists the directed edge or path from $s$ to $s'$, 
we denote $s \preceq s'$.
In this case, we say that $s'$ is {\em reachable} from $s$.
  
Assume that DAG $G \triangleq (V,E,r)$.
A WOM device $D$ associated with the graph $G$ can store any $v \in V$ as its state.
The initial state of $D$ is assumed to be $r$.
We can change the state of $D$ from $s \in V$ to $s' \in V$, 
if there exists a directed edge or a path from $s \in V$ to $s' \in V$.

The message alphabet to be written in $D$ is denoted by $\mathcal{M} \triangleq \{1,2,\ldots,M\}$.
In our scenario, we want to write several messages in $\mathcal{M}$ into $D$.
Namely, a sequence of messages is sequentially written in $D$.
When we write a message $m \in \mathcal{M}$,
we must change the state of $D$.
After that, 
we get a written message $m$ in $D$ by reading the state of $D$.

\subsection{Encoding function and decoding function}

In this paper, we assume that an encoder has a unit memory 
to keep a previous input message, 
and that a corresponding decoder has no memory.

In order to write input messages into the WOM device $D$, we need an encoding function.
The definition of the encoding function is given as follows.

\begin{definition}\label{defi:enco}
Assume that a function
\[
\mathcal{E}: V \times \mathcal{M} \rightarrow V \cup \{ {\sf fail} \}
\]
is given.
The symbol {\sf fail} represents a failure of an encoding process.
If for any $s \in V$ and $m \in \mathcal{M}$, 
$
s \preceq \mathcal{E}(s,m) 
$
or 
$
\mathcal{E}(s,m) = {\sf fail},
$
then the function $\mathcal{E}$ is called an {\em encoding function}.
\end{definition}

The following definition on the decoding function 
is used to retrieve the message $m \in \mathcal{M}$ from $D$.

\begin{definition}\label{defi:deco}
Assume that a function $\mathcal{D}: V \rightarrow \mathcal{M}$ is given.
If for any $m \in \mathcal{M}$ and $s \in V$, the consistency condition
\begin{equation} \label{consistency}
\mathcal{D}(\mathcal{E}(s,m)) = m 
\end{equation} 
is satisfied,  
then the function $\mathcal{D}$ is called a {\em decoding function}.
\end{definition}

Assume that a sequence of input messages $m_1,m_2,\ldots \in \mathcal{M}$
are sequentially encoded.
We also assume that the initial node is $s_0 = r$.
The encoder encodes the incoming message $m_i$ by 
\begin{equation}
s_i = \mathcal{E}(s_{i-1}, m_i)
\end{equation}
for $i = 0, 1, \ldots$.
The output of the encoder, $s_i$, is then written into $D$ as the
next node, i.e., next state.

The following definition gives the worst number of consecutive writes to $D$ 
for a pair of encoding and decoding functions $(\mathcal{E},\mathcal{D})$.
\begin{definition}{}
Assume that a sequence of messages of length $t$, $(m_1,m_2,\ldots, m_t) \in \mathcal{M}^t$, is given.
Let $(s_1,s_2,\ldots, s_t)$ be the state sequence defined by $s_i = \mathcal{E}(s_{i-1}, m_i)$ 
under the assumption $s_0=r$.
If for any $i \in [1,t]$, $s_i \ne {\sf fail}$, then
the pair $(\mathcal{E},\mathcal{D})$ is said to be $t$ writes achievable.
The {\em worst number of writes} $t^*$ is defined by
\begin{equation}
t^* \triangleq \max \{t \mid  (\mathcal{E}, \mathcal{D}) \mbox{ is $t$-times achivable} \}.
\end{equation}
\end{definition}

In other words, the pair $(\mathcal{E},\mathcal{D})$ ensures consecutive $t^*$ writes of fixed size messages in the worst case.
Of course,  in terms of efficient use of the device $D$,  we should design $(\mathcal{E},\mathcal{D})$ to maximize $t^*$.

\section{Realization of encoding function}

In this section, 
we prepare basic definitions required for precise description 
of our encoding algorithm used in the encoding function.

\subsection{Notation}

The {\em reachable region} $R(s) (s \in V)$ is 
the set of the all nodes to which a node can change from $s$.
The precise definition is given as follows.
\begin{definition}\label{defi:reachable}
The reachable region $R(s) (s \in V)$ is defined as
\begin{equation}
R(s) \triangleq \{ x \in V | s \preceq x \}.
\end{equation}
\end{definition}

Encoding regions and a message function defined below play a critical role in an encoding process.

\begin{definition}{}\label{defi:patition}
Assume that a family of subsets in $V$, 
\[
\{\omega(s) \} \triangleq\{\omega(1), \omega(2), \ldots, \omega(|V|) \}, 
\]

is given.
Let $k$ be a positive integer satisfying $|\mathcal{M}| \le k$.
If the family satisfies the following two conditions:
\begin{enumerate}
\item $\forall s \in V, \   \omega(s) \subseteq R(s)$
\item $\forall s \in V, \   |\omega(s)| = k$ or  $\omega(s) = \emptyset$,
\end{enumerate}
then the family $\{\omega(s) \}$ is said to be the {\em encoding regions}.
\end{definition}

A message function defined below is used to retrieve a message.
\begin{definition}{}\label{defi:g}
Assume that a family of encoding regions $\{\omega(s)\} $ is given.
Let $g$ be a function: 
 \[
 g: \bigcup_{s \in V} \omega(s)  \rightarrow \mathcal{M}.
 \]
If for any $m \in \mathcal{M}$ 
and for any $s \in \{x \in V \mid  \omega(x) \ne \emptyset \}$,  there exists $a \in \omega(s)$ satisfying 
\[
g(a) = m,
\]
then the function $g$ is called a {\em message function} corresponding to the family of encoding regions $\{\omega(s) \}$.
\end{definition}

We can consider the message function as the labels attached to the nodes.
In following encoding and decoding processes, the label, i.e., the value of the message function corresponds to the message associated with the node.
This definition implies that we can find arbitrary message $m \in \mathcal{M}$ in arbitrary encoding region $\omega(s)$ $(s \in \{ x \in V \mid \omega(x) \ne \emptyset \})$.
An example of encoding regions and a message function is shown in the right-hand of Fig.~\ref{fig:dm}.

We use the above definitions of the encoding regions and the message function to encode given messages.
In order to write a sequence of messages, we must connect several nonempty encoding regions to make {\em layers}.
We here define frontiers and layers as follows. 
\begin{definition}{}\label{defi:ft}
For a subset of nodes $X \subseteq V$, 
the {\em frontier} of $X$, $F(X)$,  is defined by
\begin{equation}
F(X) \triangleq \{x \in X \mid R(x) \cap X = \{x\} \}. 
\end{equation}
\end{definition}
If $x \notin F(X)$ is hold for $x \in X$, then there exists $y \in F(X)$
which is reachable from $x$, i.e., $x \preceq y$.

A layer consists of a union set of encoding regions.
\begin{definition} \label{defi:bset}
Assume that a family of encoding regions $\{\omega(s) \}$ is given.
The {\em layer} $\mathcal{L}_i$ is recursively defined by 
\begin{equation}
\mathcal{L}_i \triangleq \bigcup_{x \in F(\mathcal{L}_{i-1})}\omega(x),\quad\mathcal{L}_0 = \{r\}.
\end{equation}
$r$ represents the root of DAG.
\end{definition}

\begin{definition}
Assume that for  integer $i \ge 0$, $\mathcal{L}_i \subset V$ is given.
The {\em start point set} $V^*$ is defined by
\begin{equation}
V^* \triangleq \bigcup_{i \ge 0} F(\mathcal{L}_i).
\end{equation}
\end{definition}

Figure \ref{fig:fe} shows an example of frontiers and layers.

\begin{figure}[tb]
\begin{center}
\includegraphics[width = 0.4\textwidth]{./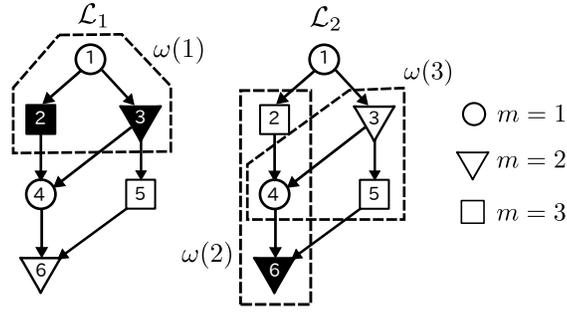}
\caption{
An example of frontiers and  layers.
The dashed boxes represent the encoding regions $\omega(1) = \{ 1,2,3\}, \omega(2) = \{2,4,6\}, \omega(3) = \{3,4,5\}$.
The encoding regions $\omega(4), \omega(5), \omega(6)$ are empty sets.
The layers are $\mathcal{L}_0 = \{1\}, \mathcal{L}_1 = \omega(1), \mathcal{L}_2 = \omega(2) \bigcup \omega(3)$.
The frontier for each layer is depicted as filled circles: $F(\mathcal{L}_1) = \{2,3\}, F(\mathcal{L}_2) = \{ 6\}$.
The start point set is 
$V^* = F(\mathcal{L}_0) \bigcup F(\mathcal{L}_1) \bigcup F(\mathcal{L}_2) = \{1,2,3,6\}$.
In the right figure,  the values of the message function are expressed as follows:
The values 1, 2, and 3 are represented by a circle, a triangle, and a square, respectively. }
\label{fig:fe}
\end{center}
\end{figure}

\subsection{Encoding algorithm}

In this subsection, we explain the encoding algorithm to realize an encoding function.
The algorithm presented here is similar to the algorithm presented in the reference \cite{q}.
The encoding algorithm is shown in Algorithm \ref{alg1}. 

%% 本節では,符号化関数$\mathcal{E}(s,m)$を実現する符号化アルゴリズムを示す．
%% このアルゴリズムは,文献\cite{q}で示されるアルゴリズムと,ほぼ同様の
%% 考え方により構成されている．
%% 符号化アルゴリズムの詳細をAlgorithm \ref{alg1} に示す．

\begin{algorithm}[tb]
\caption{Encoding algorithm}
\label{alg1}                          
\begin{algorithmic}[1]                  
\STATE input: $s \in V$ (current state)
\STATE input: $m \in \mathcal{M}$ (message)
\STATE output: $s'  = \mathcal{E}(s,m)$ (next state, or ${\sf fail}$)
%% (次のブロックの状態, または$fail$)
\STATE $d:= \min [\{ x \in V \mid s \in \omega(x) \} \cup \{\infty\}] $
\IF {$\omega(d) = \emptyset$ or $d = \infty$}
\STATE{output ${\sf fail}$ and quit.}
\ENDIF
\STATE $y := \min \{x \in \omega(d) \mid g(x)=m\}$ 
\IF  {$s \preceq y$}
\STATE {$s' := y$}
\ELSE
\STATE $i =  \min\{i' | s \in \mathcal{L}_{i'} \}$
\STATE $d := \min [\{x \in F(\mathcal{L}_i)  \mid  s \preceq x\}  \cup \{\infty\} ]$
%%\STATE $d := \min \{f \in F(d)  \mid s \preceq f \}$
\STATE Go to line 5．
%% \STATE 5行目に戻る．
\ENDIF
\STATE output $s'$ and quit.
%% \STATE $s'$ を出力し,終了．
\end{algorithmic}
\end{algorithm}

Suppose that we have the two inputs, a state $s$ which represents the current node in the state transition graph, and a message $m$.
The main job of this encoding algorithm is 
to find $y \in \omega(d)$ satisfying $g(y) = m$ for a given message $m$. 
The encoding region  $\omega(d)$ can be considered as the current 
{\em encoding window} in which the candidate of the next state  
is found.  The variable $d$ is called 
a {\em start point} of the encoding window.
If such $y$ can be written in $D$ or is reachable from $s$ (i.e., $s \preceq y$), 
then the next state is set to $s' := y$ in line 10 of Algorithm \ref{alg1}.
Otherwise, the current encoding window should move to another encoding 
region in the next layer (line 13). 
The new start point  $z$ is chosen in the frontier $F(\mathcal{L}_i)$
and $d$ is updated as $d := z$.
\footnote{It is clear that,  for any $x \in \omega(d)$ ($\omega(d)$ is the current encoding window), 
there exists $z \in F(\mathcal{L}_i)$ satisfying $x \preceq z$.}
The layer index $i$ is the minimum index satisfying $s \in \mathcal{L}_i$.

The decoding function associated with the encoding function $\mathcal{E}$ realized by Algorithm \ref{alg1}
is given by 
\[
\mathcal{D}(x) = g(x).
\]
From the definition of the message function and the procedure of Algorithm  \ref{alg1},
it is evident that this function satisfies the consistency conditions.

%\begin{figure}[b]
%\begin{center}
%\includegraphics[width = 0.4\textwidth]{./figure/enal.pdf}
%\caption{
%This figure presents a state transition graph used for explanation of the encoding algorithm.
%The encoding regions $\omega(1), \omega(2), \omega(3)$ are indicated by the dashed boxes. 
%We assume $\omega(x) = \emptyset$ for $x \in \{4,5,6\}$.
%Encoding regions $\omega(1)$ and layer $\mathcal{L}_1$ are indicated by the dashed boxes 
%in the left-hand of the figure.
%Encoding regions $\omega(2), \omega(3)$ and the layer $\mathcal{L}_2$ are 
%indicated by the dashed boxes in the right-hand of the figure.
%The values of the message function are represented with the shapes 
%of the nodes, circles $(m = 1)$, squares $(m = 2)$ and triangles $(m = 3)$.
%The black nodes represent  frontiers.
%}
%\label{fig:exam-enal}
%\end{center}
%\end{figure}

We here explain an example of an encoding process
by using the state transition graph presented in Fig.~\ref{fig:fe}.
Assume that an input message sequence $(m_1, m_2) = (2,3)$ is given.
In the beginning of an encoding process, the current state is 
initialized as $s = 1$.
Since the initial message is $m_1 = 2$,   the pair $(s = 1, m = 2)$ is firstly 
given to Algorithm \ref{alg1}.

%% 図\ref{fig:g}に基づき,符号化プロセスを説明する．
%% ただし,$i \in V \backslash \{0,5,10 \}$について,$\omega(i) = \emptyset$とし,
%% メッセージ列 $(m_1 ,m_2) = (2,3)$を符号化したいものと仮定する．
%% まず,メッセージ$m_1 = 2$に対して,Alogorithm 1への
%% 入力は$s=0, m=2$である．
%% ブロックの初期状態は$s = 0$である．

In this case, we have $d = 1$ in line 4.
Since $g(3) = 2$ is satisfied in $\omega(1)$ in line 8, 
the candidate of the next state $y = 3$ is obtained.
Because $s = 1 \preceq y = 3$ holds, we obtain $s' = y = 3$ in line 10.
The encoding process outputs $s' =3$ and then quits the process.

Let us consider the second encoding process for $m_2 = 3$.
We start a new encoding process with inputs $(s = 3, m = 3)$.
From line 4, we have $d=1$. This means that we set the encoding window to $\omega(1)$.
In this case, the encoder finds $g(2) = 3$ and lets $y = 2$. 
However, the  condition $s=3 \preceq y = 2$ is not satisfied, i.e., $y = 2$ cannot be
the next state because the node cannot change from $3$ to $2$.
In order to find the next state, we need to change the encoding window.
From line 13, the new start point of the encoding window $d = 3$ is 
chosen from the frontier as $d = \min [\{x \in F(\mathcal{L}_1) \mid s \preceq x\}] = 3$.
This operation means that we change  the encoding window from $\omega(1)$ to $\omega(3)$.
From the new encoding window $\omega(3)$, we can find $x = 5$ satisfying $g(5) = 3$.
Because $s = 3 \preceq y = 5$ holds (i.e., $y=5$ is reachable from $ s = 3 $),  we finally have the next state $s' = 5$.

\section{Construction of WOM codes}

The performance and efficiency of the WOM codes realized by the encoding and decoding functions 
described in the previous section depend on the choice of the encoding regions.
In this section, we propose a method  to create a family of the encoding regions
and a method to determine labels of nodes, i.e,  the message function by using integer programming.

\subsection{Greedy rule for constructing a family of encoding regions} \label{seq:region}
In this subsection, we propose a method for creating a family of the encoding regions 
based on a greedy rule.
The proposed WOM codes described later exploit a family of
the encoding regions defined based on the following sets.

\begin{figure}[tb]
\begin{center}
\includegraphics[width = 0.5\textwidth]{./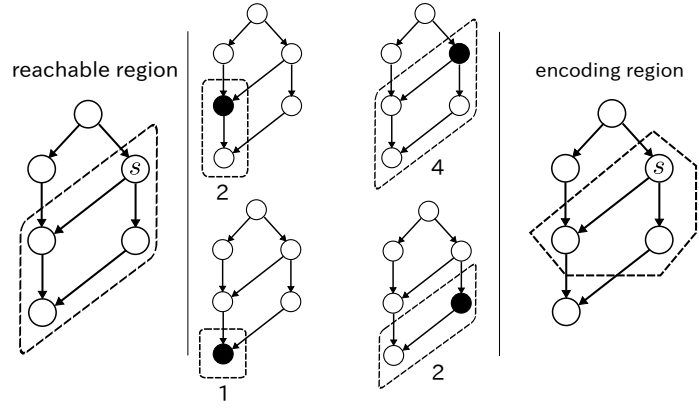}
\caption{An example of a process of the greedy construction of the encoding region for $s$, 
whose size is $k=3$.
The dashed box in the leftmost figure indicates the reachable region $R(s)$.
In the middle of the figure,
the numbers of reachable nodes for each elements in $R(s)$ are presented.
We then select top 3 nodes in terms of the number of reachable nodes
as an encoding region.
In the rightmost figure, 
the dashed box represents the encoding region $\Omega(s)$ 
constructed by the greedy process. 
} \label{fig:pro-en}
\end{center}
\end{figure}

\begin{definition} \label{defi:Omega}
Assume that an integer $k(M \le k)$ is given.
Let us denote the elements in the reachable region $R(s)$ by $r_1,r_2,\ldots,r_{|R(s)|} (s \in V)$
where the index of $r_i$ satisfies 
\begin{equation}
|R(r_1)| \ge |R(r_2)| \ge \cdots  \ge |R(r_{|R(s)| })|.  \label{eq:order}
\end{equation}
The set $\Omega(s)$ is defined by
\begin{equation}
\label{eq:omega}
\Omega(s) \triangleq 
\left\{
\begin{array}{ll}
\{ r_{1}, r_{2}, \ldots, r_{k} \}, & |R(s)| \ge k, \\
\emptyset, & |R(s)| < k.\\
\end{array}
\right.
\end{equation}
\end{definition}
In the above definition, a tie break rule is not explicitly stated.
If $|R(r_{a})| = |R(r_{b})|$ holds, we will randomly choose $r_{a}$ or $r_{b}$ to break a tie.
Figure \ref{fig:pro-en} shows an example of a greedy process for generating an encoding region.

The underlying idea in the greedy process is simply 
to enlarge future writing possibilities.
The set $\Omega(s)$ is determined by a greedy manner 
in terms of the size of reachable regions.
In other words, we want to postpone a state transition 
to a state with the smaller reachable region as late as possible.
This is because such a transition would lead to a smaller number of writes.

In the following part of this paper, we will use the encoding regions defined by
\begin{equation}
\omega(s) \triangleq 
\left\{
\begin{array}{ll}
\Omega(s), & s \in V^*, \\
\emptyset, & s \notin V^*.
\end{array}
\right.
\end{equation}

\subsection{Message labeling}\label{seq:ip}

In the previous subsection, we saw how to determine the family of the encoding regions.
The remaining task  is to find
appropriate message labels of nodes. Namely,  we must find an appropriate 
message function satisfying the required constraint described in Definition \ref{defi:g}.
In this subsection, we will propose a method to find a message function based on integer programming.

The solution of the following integer linear programming problem provides a message function.
\begin{definition} \label{defi:ip}
Assume that  a family of the encoding regions
$\{\omega(s)\}$ is given.
Let $x^*_{j,\ell}, y^*_\ell \in \{0,1\} (j \in \Gamma, \ell \in [1,k])$  
be a set of value assignments of an optimal solution 
of the following integer problem:
\begin{eqnarray}
&&{\rm Maximize} \ \sum_{\ell \in [1,k]} y_\ell \\ \nonumber
&&{\rm Subject\ to}  \\
&& \forall  i \in V^*,   \forall  \ell \in [1,k],\quad  \sum_{j \in \omega(i)} x_{j,\ell} \ge y_{\ell},  \\  \label{eq:xy}
&& \forall j \in \Gamma,\quad \sum_{\ell \in  [1,k]} x_{j,\ell} = 1,\\
&& \forall j \in \Gamma, \forall  \ell \in [1,k],  \quad  x_{j,\ell} \le y_{\ell}, \\
&& \forall j \in \Gamma, \forall  \ell \in [1,k],\quad x_{j,\ell} ,y_{\ell} \in \{0,1 \}, 
\end{eqnarray}
where $\Gamma \triangleq \bigcup_{i \ge 0} \mathcal{L}_i$.
The maximum value of the objective function is denoted by $M^*$. 
\end{definition}
The symbol $z^*_j (j \in \Gamma)$ represents 
\begin{equation}
z^*_j  \triangleq \arg \max_{\ell \in [1,k]}  \mathbb{I} [x^*_{j, \ell} = 1],
\end{equation}
where the indicator function $\mathbb{I}[condition]$ takes the value one if $condition$ is true;
otherwise it takes the value zero. If we regard $z^*_j$ as a color put on the node $j$,  the above IP problem
can be considered as an IP problem for a coloring problem. In our case, the coloring constraint is as follows:
for every node $s$ (i.e., state) in $V^*$, the neighbor of $s$ including itself contains $M^*$-colors.
This problem has close relationship to the {\em domatic partition problem}.

In the following arguments,  we set the maximize number of message $M$ equal to $M^*$.
\begin{definition}
Assume that  a function $G: \Gamma \rightarrow \mathcal{M}$ 
is defined by
$
G(j) \triangleq \alpha(z_j^*),
$
where the mapping $\alpha: A \rightarrow \mathcal{M}$ is an arbitrary bijection.
The set $A$ is defined as 
\[
A \triangleq \{\ell \in [1,k] \mid y_\ell^* = 1  \}.
\]
\end{definition}

The following theorem means that the determination of the message function can be done 
by solving the above integer programming problem.
\begin{theorem}
The function $G$ is a message function.
\end{theorem}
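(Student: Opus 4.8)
The plan is to check directly the two requirements that Definition~\ref{defi:g} imposes on a message function: that $G$ is a map from $\bigcup_{s \in V}\omega(s)$ into $\mathcal{M}$, and that every message $m \in \mathcal{M}$ is carried by some node in every nonempty encoding region. The entire argument amounts to reading off the meaning of the four constraint families in Definition~\ref{defi:ip}, so the emphasis is on correct interpretation rather than computation. First I would settle the domain. By definition $\Gamma = \bigcup_{i \ge 0}\mathcal{L}_i$, and for $i \ge 1$ each layer $\mathcal{L}_i$ is a union of encoding regions $\omega(x)$ with $x \in F(\mathcal{L}_{i-1}) \subseteq V^*$, while $\mathcal{L}_0 = \{r\}$ is absorbed into $\mathcal{L}_1 = \omega(r)$. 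Since $\omega(s) = \emptyset$ whenever $s \notin V^*$, this identifies $\Gamma = \bigcup_{s \in V^*}\omega(s) = \bigcup_{s \in V}\omega(s)$, so $G$ does have exactly the domain required by Definition~\ref{defi:g}.

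Next I would verify that $G$ is well defined, i.e. that $\alpha(z_j^*)$ makes sense for every $j \in \Gamma$. The equality constraint $\sum_{\ell} x_{j,\ell} = 1$ guarantees that exactly one index $\ell$ satisfies $x_{j,\ell}^* = 1$, so $z_j^*$ is unambiguously that index. The constraint $x_{j,\ell} \le y_\ell$ then forces $y_{z_j^*}^* = 1$, that is $z_j^* \in A$, so $\alpha(z_j^*)$ is defined. Because $M = M^* = \sum_{\ell} y_\ell^* = |A|$, the map $\alpha \colon A \to \mathcal{M}$ really is a bijection, as asserted.

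Finally, for the covering property, fix $m \in \mathcal{M}$ and $s \in V^*$, so that $\omega(s) \ne \emptyset$. Since $\alpha$ is a bijection there is a unique $\ell \in A$ with $\alpha(\ell) = m$, and $\ell \in A$ means $y_\ell^* = 1$. Applying the constraint $\sum_{j \in \omega(i)} x_{j,\ell} \ge y_\ell$ of Definition~\ref{defi:ip} with $i = s$ yields $\sum_{j \in \omega(s)} x_{j,\ell}^* \ge y_\ell^* = 1$, so there exists $a \in \omega(s)$ with $x_{a,\ell}^* = 1$. For this node $z_a^* = \ell$, hence $G(a) = \alpha(z_a^*) = \alpha(\ell) = m$. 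This is precisely the condition demanded in Definition~\ref{defi:g}, which finishes the proof.

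Each step is individually short, so I do not anticipate a genuine technical obstacle; the one point that deserves care is the bookkeeping that identifies $\Gamma$ with $\bigcup_{s}\omega(s)$ together with the observation that the covering constraint is nothing but a restatement of ``every nonempty region meets every active label.'' It is the optimality of the solution, which forces $M = M^*$, that makes $\alpha$ surjective onto the full alphabet $\mathcal{M}$; without recording that identity the covering property would only be established for a subset of $\mathcal{M}$, so I would be careful to invoke it explicitly.
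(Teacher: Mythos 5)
Your proposal is correct and follows essentially the same route as the paper's own proof: fix $m$ and a node $i \in V^*$, pull back $m$ through the bijection $\alpha$ to get an index $\ell$ with $y_\ell^* = 1$, and use the constraint $\sum_{j \in \omega(i)} x_{j,\ell}^* \ge y_\ell^*$ together with integrality to produce a node in $\omega(i)$ labeled $m$. Your additional checks (that $\Gamma$ equals $\bigcup_s \omega(s)$, that $z_j^*$ is well defined via the constraints $\sum_\ell x_{j,\ell} = 1$ and $x_{j,\ell} \le y_\ell$, and that $M = M^*$ makes $\alpha$ a genuine bijection) are details the paper leaves implicit, but they do not change the argument.
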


\begin{proof}
We assume that arbitrary $m \in \mathcal{M}$ and $i \in V^*$ are given.
First, we consider $\tilde{\ell} = \alpha^{-1}(m)$.
From the definition of the set $A$, we have $y_{\tilde \ell}^* = 1$.
The optimal solution satisfies $ \sum_{j \in \omega(i)} x^*_{j, \tilde{\ell} } \ge y_{\tilde \ell}^* = 1$.
Because $x^*_{j, \tilde{\ell} } \in  \{0,1\}$, 
there exists $\tilde{j} \in \omega(i)$ satisfying $x^*_{\tilde{j}, \tilde{\ell} } = 1$.
By the definition of the function $G$, the equation
$
 G(\tilde{j}) =  \alpha(\tilde{\ell})  =  \alpha(\alpha^{-1}(m)) = m
$
holds. This satisfies the condition for the message function.
\end{proof}

In the following, we use this message function $G$ 
in the encoding function (Algorithm 1) and the decoding function. 

\subsection{Worst number of writes}
\begin{figure}[tb]
\begin{center}
\includegraphics[width = 0.25\textwidth]{./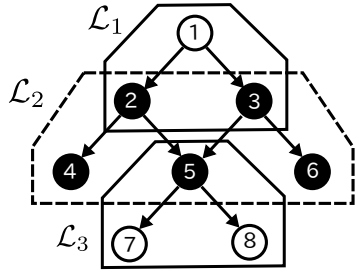}
\caption{
An example of a case where the worst number of writes is 2.
The boxes indicated in the figure is the layers $\mathcal{L}_1, \mathcal{L}_2, \mathcal{L}_3$.
The black nodes represent frontiers of layers $\mathcal{L}_1, \mathcal{L}_2$. The node with index 4 is a frontier whose encoding region 
is the empty set.  Since the node with index 4 is included in $\mathcal{L}_2$, we thus have $t^* = 2$.}
\label{fig:wn}
\end{center}
\end{figure}

The worst number of writes $t^*$ provided by 
the encoding algorithm with the encoding regions and the message function defined above 
is given by
\begin{equation}
t^* = \min\{i > 0 \mid \exists x \in F(\mathcal{L}_i), \ \omega(x) = \emptyset \}.
\end{equation}
This statement appears clear from the definition of the encoding algorithm.
Figure \ref{fig:wn} presents an example for $t^* = 2$.

\section{Numerical results on proposed WOM codes}

In this section, we will construct several classes of  fixed rate WOM codes based on the 
proposed construction. We used the IP solver IBM CPLEX for solving the integer programming problem.

\subsection{Multilevel flash memories}

Multilevel flash memories consist of a large number of cells.
Each of cell can store electrons in itself.
It is assumed that the level of a cell can be increased but cannot be decreased.
In this paper, we assume that $n$ cells that can keep $q$ level values from the alphabet $\{0,1,\ldots,q-1\}$.
The state transitions of $q$ level multilevel flash memories of $n$ cells 
can be represented by a state transition graph (directed square grid graph) presented in Fig. \ref{fig:flash}.

Figure \ref{fig:flash} presents 
the state transition graph for multilevel flash memory $(n = 2, q = 4)$ and
the encoding regions constructed by the proposed method. 
In this case, we can always write 5 messages for each write operation and 
the worst number of writes is $t^* = 2$ in this case.
\begin{figure}[b]
\begin{center}
\includegraphics[width = 0.45\textwidth]{./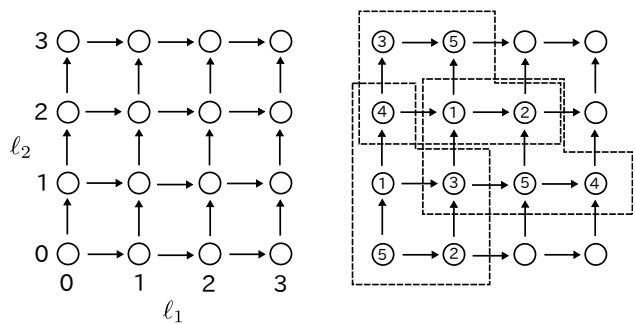}
\caption{
The left figure presents the state transition graph for multilevel flash memories $(n = 2, q = 4)$.
The levels of two cells are denoted by $\ell_1$ and $\ell_2$.
The horizontal (resp. vertical) direction means the level of the cell $\ell_1$ (resp. $\ell_2$).
The right figure presents a family of encoding regions and a message function 
constructed by the proposed method.
The numbers written in the nodes represent the values of the message function.
Nonempty encoding regions are indicated by the boxes.
For any message $m$ in $\{1,2,3,4,5\}$, 
each nonempty encoding regions contains a node corresponding to $m$. 
}
\label{fig:flash}
\end{center}
\end{figure}

\begin{table}[b]  
\caption{Worst numbers of writes $t^*$ of proposed WOM codes for $q$ level flash memories with $n = 2$ cells.}
\begin{center} 
  \begin{tabular}{c|ccccc} 
  \hline \label{tab:t-n2}
    $M \backslash q$ & 4 & 5 & 6 & 7 & 8 \\ \hline
    4 & 3 & 4 & 5 & 6 & 7\\
    5 & 2 & 3 & 4 & 5 & 6\\
    6 & 2 & 3 & 3 & 4 & 5\\
    7 & 1 & 2 & 3 & 3 & 4\\
    8 & 1 & 2 & 3 & 3 & 4\\ 
\hline
  \end{tabular}
\end{center}
\end{table}

\begin{table}[tb] 
\begin{center}
\caption{Comparison between $t^*$ of proposed WOM codes and upper bound ($q$ level flash memories)($n=2, M = 8$)} 
  \begin{tabular}{c|cccccccc} \hline \label{tab:ub}
     & $q = 4$ & 5 & 6 & 7 & 8 &16 & 32 & 48\\ \hline
    Upper bound & 1 & 2 & 3 & 3 & 4 & 9 & 20 & 31\\
    Proposed & 1 & 2 & 3 & 3 & 4 & 9 & 20 & 31\\
    \hline
  \end{tabular} 
\end{center}
\end{table}

\begin{table}[tb] 
\begin{center}
\caption{Worst numbers of writes $t^*$ of proposed WOM codes for $q$ level flash memories with $n = 3$ cells.}
  \begin{tabular}{c|ccccc} 
  \hline \label{tab:t-n3}
    $M \backslash q$ & 4 & 5 & 6 & 7 & 8 \\ \hline
    4 & 6 & 8 & 10 & 12 & 14\\
    5 & 4 & 5 & 7 & 8 & 10\\
    6 & 4 & 5 & 7 & 8 & 10\\
    7 & 3 & 5 & 6 & 8 & 9\\
    8 & 3 & 4 & 6 & 7 & 8\\ 
\hline
  \end{tabular} 
\end{center}
\end{table}

\begin{table}[tb] 
\begin{center}
\caption{Worst numbers of writes $t^*$ of proposed WOM codes for $q$ level flash memories with $n = 4$ cells.}
  \begin{tabular}{c|ccccc} 
  \hline \label{tab:t-n4}
    $M \backslash q$ & 4 & 5 & 6 & 7 & 8 \\ \hline
    5 &  7& 9 & 12 & 14 & 17\\
    6 &  5& 7 & 9 & 11 & 13\\
    7 &  5& 7 & 9 & 11 & 13\\
    8 &  5& 7 & 9 & 11 & 13\\ 
\hline
  \end{tabular} 
\end{center}
\end{table}

Table \ref{tab:t-n2} presents the worst numbers of writes $t^*$ of 
the proposed WOM codes for multilevel flash memories
for the cases $n = 2$.  
When we solved the IP problems,   $k = M$ was assumed.
For example, in the case of $q=8, M=8$, the worst number of writes equals $t^* = 4$.
In \cite{q}, several upper bounds for $t^*$ are presented for WOM codes $(n = 2, q, M, t^*)$.
For $M \ge 8$, the worst numbers of writes are upper bounded as
\begin{equation}
t^* \le \left\lceil \frac{2(q-1)}{3} \right\rceil -1. 
\end{equation}
Table \ref{tab:ub} shows 
the comparison between this upper bound and the worst numbers of writes of the proposed codes for $n=2, M = 8$.
We can see that the worst numbers of writes of the proposed WOM codes 
exactly coincide with the values of the upper bound.  
This result can be seen as an evidence of the efficiency of the WOM codes
constructed by the proposed method.

Table \ref{tab:t-n3} shows the result for $n=3$.
In \cite{q}, 
an $(n=3, q=7, M=7 ,t^*=7)$ WOM code is presented.
According to Table \ref{tab:t-n3}, the proposed WOM code attains $t^* = 8$ which is larger than that of the known code
under the same parameter setting: $n = 3, q = 7, M = 7$. 
Table \ref{tab:t-n4} shows the result for $n=4$.
In our experiments, we were able to 
construct WOM codes for the range of $M \in \{5,6,7,8\}$ and $q \in \{4,5,6,7,8\}$ with reasonable computation time. 

\subsection{WOM codes with constraints for reducing ICI}

In the current rapid grow of the cell density of NAND flash memories,  
the ICI is getting to be one of hardest obstacles 
for narrowing cell sizes.
The paper \cite{ici-q} showed several excellent fixed rate WOM codes with constraints for reducing ICI.
Their codes incorporate a constraint that keeps balance  of the charge levels of adjacent cells.
It is expected that such constraints promote a reduction on the ICI effect and leads to realizing more reliable memories. 
In this subsection, we will apply our construction to WOM codes with constraints for reducing ICI.

Assume that we have flash memory cells $c_1, c_2, \ldots, c_n$.
The current level for each cell is denoted by $\ell_i$.
The following definition gives the $d$ imbalance constraint for reducing ICI.
\begin{definition}\label{defi:imb}
Let $d$ be a positive integer smaller than $n$.
For any write sequence, if each cells $c_i (1 \le i \le n)$ satisfies 
\begin{equation}
\max_{i,j,\  i \neq j} | \ell_i - \ell_j | \le d,
\end{equation}
then we say that  the cell block satisfies {\em the $d$ imbalance constraint}.
\end{definition}
In other words,  if the cells satisfy the $d$ imbalance constraint, then the level difference between a pair of adjacent cells is limited to $d$.
It is known that a large level difference of adjacent cells tends to induce ICI. The $d$ imbalance constraint is thus helpful to reduce ICI \cite{ici-q}.

Figure.~\ref{fig:flash-ici} presents state transition graphs for 4 level flash memories of two cells $(n=2)$
with the $d$ imbalance constraint $(d = 1,2)$. From this figure,  at any state (or node),  the difference of level between $c_1$ and $c_2$
are always limited to $d (=1,2)$.
\begin{figure}[tb]
\begin{center}
\includegraphics[width = 0.47\textwidth]{./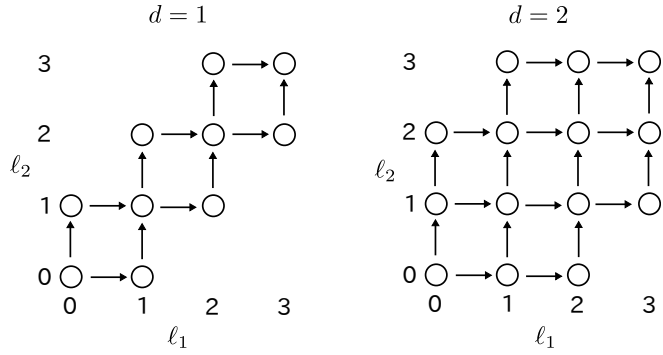}
\caption{
This figure presents a state transition diagram of
$q=4$ level flash memories of two cells $(n=2)$ satisfying 
$d$ imbalance constraints (left: $d=1$, right: $d=2$).
The levels of two cells are denoted by $\ell_1$ and $\ell_2$.
The horizontal (resp. vertical) direction means the level of the cell $\ell_1$ (resp. $\ell_2$).
}
\label{fig:flash-ici}
\end{center}
\end{figure}

It is straightforward to apply our code construction to the case of the WOM codes with $d$-imbalance constraint.
Figure \ref{fig:flash-ici-construct} presents a family of encoding functions and values of a message function constructed by the 
proposed method. This example shows universality of the proposed construction, i.e., it can be applied to any state transition graph.

\begin{figure}[tb]
\begin{center}
\includegraphics[width = 0.2\textwidth]{./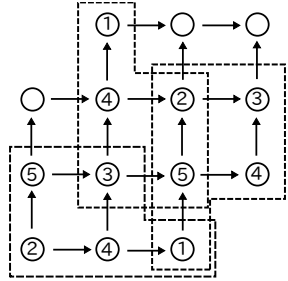}
\caption{
This state transition graph corresponds to 
$q=4$ level flash memories of two cells $(n=2)$ satisfying $d=2$ imbalance constraint.
The values in the circles represent the values of the message function.
Nonempty encoding regions are indicated by the boxes.}

\label{fig:flash-ici-construct}
\end{center}
\end{figure}

Table \ref{tab:ici-ub} shows the comparison between 
upper bound presented in \cite{ici-q} 
and the worst numbers of writes of the proposed codes $(n=2, M=8, d=3)$.
For $M = 8$, the worst numbers of writes of the WOM codes with the $d=3$ imbalance constraint 
are upper bounded by 
\begin{equation}
t^* \le \left\lfloor \frac{3(q-1)}{5} \right\rfloor.
\end{equation}
We can see that the worst numbers of writes of the proposed WOM codes 
exactly coincide with the values of the upper bound.

Tables  \ref{tab:ici-n3} and \ref{tab:ici-n4} present the worst numbers of writes $t^*$ of the proposed WOM codes with the $d$ imbalance constraint $(n = 3,4)$.
The paper  \cite{ici-q} only deals with the case of two cells $(n=2)$.
It is not trivial to construct WOM codes ($n=3, n=4$) with the $d$ imbalance constraint by using the construction given in \cite{ici-q} but
our construction is directly applicable even for such cases.

\begin{table}[tb] 
\begin{center}
\caption{Comparison between $t^*$ of proposed WOM codes with the $d$ imbalance constraint and upper bound ($n=2, M = 8, d=3$)} 
\label{tab:ici-ub}
  \begin{tabular}{c|cccccccc} \hline 
     & $q = 4$ & 5 & 6 & 7 & 8 &16 & 32 & 48\\ \hline
    Upper bound & 1 & 2 & 3 & 3 & 4 & 9 & 18 & 28\\
    Proposed & 1 & 2 & 3 & 3 & 4 & 9 & 18 & 28\\
    \hline
  \end{tabular} 
\end{center}
\end{table}

\begin{table}[tb] 
\begin{center}
\caption{Worst numbers of writes $t^*$ of proposed WOM codes with the $d$ imbalance constraint
$(n = 3)$}
  \begin{tabular}{c|cc|cc} 
  \hline \label{tab:ici-n3}
     & \multicolumn{2}{c|}{$d = 2$} & \multicolumn{2}{c}{$d = 3$} \\ \hline
     $M \backslash q$& 4 & 8 & 4 & 8 \\ \hline
    5& 4 & 10 & 4 & 10 \\
    6 & 4 & 9 & 4 & 10 \\
    7 & 3 & 9 & 3 & 9 \\
    8 & 3 & -- & 3 & 8 \\ 
\hline
  \end{tabular} 
\end{center}
\end{table}

\begin{table}[tb] 
\begin{center}
\caption{Worst numbers of writes $t^*$ of proposed WOM codes with the $d$ imbalance constraint $(n = 4)$}
  \begin{tabular}{c|cc|cc} 
  \hline \label{tab:ici-n4}
     & \multicolumn{2}{c|}{$d = 2$} & \multicolumn{2}{c}{$d = 3$} \\ \hline
     $M \backslash q$& 4 & 8 & 4 & 8 \\ \hline
    5 & 7 & -- & 7 & 17 \\
    6 & 5 & 13 & 5 & 13 \\
    7 & 5 & 13 & 5 & 13 \\
    8 & 5 & 13 & 5 & 13 \\ 
\hline
  \end{tabular} 
\end{center}
\end{table}

\section{Conclusion}

In this paper, we proposed a construction of fixed rate non-binary WOM codes 
based on integer programming.
The novel WOM codes with $n=2, M=8$ achieve 
the worst numbers of writes $t^*$ 
that meet the known upper bound in the range $q \in [4,8]$.
We discovered several new efficient WOM codes for $q$ level flash memories when $n = 3, 4$.
For instance, our  $(n=3, q= 7, M = 7, t^* = 8)$ WOM code provides a larger worst number of writes
than that of the known code with the parameters $(n=3, q= 7, M = 7, t^* = 7)$  \cite{q}.
In addition, We constructed several WOM codes with $d$ imbalance constraint for reducing ICI.
Our WOM codes with $n=2, M=8, d=3$ achieve the worst numbers of writes $t^*$ 
that meet the known upper bound in the range $q \in [4,8]$. 
This implies the efficiency of the WOM codes
constructed by our construction.
Another notable advantage of the proposed construction is its flexibility 
for handling high dimensional cases.
It is easy to construct for the codes with modestly large $n$ 
when the integer programming problem can be solved with reasonable time.
The proposed construction can be applied to various storage devices, 
to various dimensions (i.e, number of cells),  
and various kind of additional constraints.

\section*{Acknowledgment}

This work was supported by JSPS Grant-in-Aid for Scientific Research Grant Number 16K14267.

%\bibliographystyle{ieicetr}% bib style
%\bibliography{}% your bib database

\end{document}